\documentclass[10pt,conference,a4paper]{IEEEtran}
\usepackage[left=1.3cm, right=1.30cm, top=2cm, bottom=1cm]{geometry}
\usepackage[fleqn]{amsmath}
\usepackage[pdftex]{graphicx}
\usepackage{amssymb}
\usepackage{amsthm}
\usepackage{bm}
\usepackage{color}
\usepackage{dsfont}
\usepackage{epstopdf}
\usepackage{verbatim}
\usepackage{textcomp}
\usepackage{tikz}
\newtheorem{thm}{Theorem}

\newtheorem{rem}{Remark}
\graphicspath{ {Figures/} }

\title{  Jointly Broadcasting Data and Power with Quality of Service Guarantees }

\author{Deekshith P K \IEEEauthorrefmark{1}, Trupthi Chougule\IEEEauthorrefmark{2}, Shreya Turmari\IEEEauthorrefmark{2}, Ramya Raju\IEEEauthorrefmark{2}, Rakshitha Ram\IEEEauthorrefmark{2} and Vinod Sharma\IEEEauthorrefmark{1} \\ {\IEEEauthorrefmark{1}ECE Dept., Indian Institute of Science, Bangalore, India. \IEEEauthorrefmark{2}PES Institute of Technology, Bangalore, India.}\\{\hspace{-.6cm} \IEEEauthorrefmark{1}Email: \{deeks, vinod\}@ece.iisc.ernet.in},\\ \IEEEauthorrefmark{2}Email:\{chouguletrupthi, shreya.turmari, ramyaraju93, rakshitharam.93\}@gmail.com.}

\IEEEoverridecommandlockouts
\begin{document}
\maketitle
\begin{abstract}
In this work, we consider a scenario wherein an energy harvesting wireless radio equipment sends information to multiple receivers alongside powering them. In addition to harvesting the incoming radio frequency (RF) energy, the receivers also harvest energy from {its environment (e.g., solar energy)}. This communication framework is captured by a fading Gaussian Broadcast Channel (GBC) with energy harvesting transmitter and receivers. In order to ensure {some quality of service (QoS)} in data reception among the receivers, we impose a \textit{minimum-rate} requirement on data transmission. For the setting in place, we characterize the fundamental limits in jointly transmitting information and power subject to a QoS guarantee, for three cardinal receiver structures namely, \textit{ideal}, \textit{time-switching} and \textit{power-splitting}. We show that a time-switching receiver can {switch} between {information reception mode} and {energy harvesting mode}, \textit{without} the transmitter's knowledge of the same and \textit{without} any extra \textit{rate loss}. We also prove that, for the same amount of power transferred, on average, a power-splitting receiver supports higher data rates compared to a time-switching receiver.  
\end{abstract}
\noindent
\section{Introduction}
Intentionally transferring energy along with information, using radio frequency (RF) signals, is an attractive alternative to perpetually and remotely power energy harvesting sensors  that have limited physical accessibility. It is foreseeable that in next generation wireless systems, a picocell or femtocell base station will be enabled to wirelessly charge low power communication devices within its range. These base stations themselves could be energy harvesting \textit{green} base stations. Apart from the numerous system design challenges the problem offers, it also opens up a rich set of theoretically motivated research avenues. {On} this premise, we address the problem of characterizing the fundamental limits in jointly broadcasting data and power over a wireless medium with energy harvesting transmitter and receivers.  
\par In this work, we consider the problem of Simultaneous Wireless Information and Power Transfer (SWIPT) over a fading Gaussian broadcast channel (GBC) with an energy harvesting transmitter, ensuring a certain quality of service (QoS) guarantee to the receivers. The QoS parameter we refer to is that of \textit{minimum-rate} constraint. For the canonical fading GBC (non energy harvesting), the problem of characterizing the fundamental limits with minimum rate constraints as a means to ensure \textit{fairness} among receivers is a well studied topic (\cite{jindal2003capacity}). In the context of SWIPT, the above constraint has the added advantage that the transmission ensures {a} \textit{minimum instantaneous RF power} at the receivers at all times (which can potentially be harvested).   
\par We provide an overview of the related literature so as to elucidate our contributions in its context. The idea of transmitting power using an information encapsulated data symbol goes back to \cite{varshney2008transporting}. In \cite{grover2010shannon}, the optimal trade-off between {the} achievable rate and {the} power transferred across a noisy coupled-inductor circuit is discussed. Capacity-energy regions of a discrete memoryless multiple access channel and a {multi-hop} channel with a single relay is characterized in \cite{fouladgar2012transfer}. Achievable rates over an \textit{uplink} channel wherein the transmitters are powered via RF signals in the \textit{downlink} are provided in \cite{hadzi2014multiple}. In \cite{amor2015feedback}, authors report a result on feedback enhancing the \textit{rate-energy region} over a \textit{constant gain} multiple access channel with simultaneous transmission of information and power. 
\par  In the context of MIMO systems, \cite{zhang2013mimo} studies SWIPT by a transmitter to two receivers (either \textit{spatially separated} or \textit{co located}) in which one receiver harvests energy and the other receiver decodes information. A  joint transmit beamforming and receive power-splitting design for downlink SWIPT system is studied in \cite{shi2014joint}. A joint information and power transfer scheme that encodes information in the receive antenna index and power transfer intensity is pursued  in \cite{zhang2015energy}. As for the receiver design for SWIPT systems, two practical receiver architectures proposed in literature are time-switching receivers and power-splitting receivers \cite{zhang2013mimo}. The power splitting can happen non adaptively or adaptively (\cite{liu2013wireless}, \cite{zhou2013wireless}). For a comprehensive survey of recent advances in the domain of RF energy harvesting networks, refer \cite{lu2015wireless}. A survey on SWIPT communication systems can be found in \cite{krikidis2014simultaneous}. 
\par Existing works, to the best of our knowledge, do not consider the information theoretic characterization of fundamental limits of SWIPT systems \textit{with energy harvesting transmitter}. In the fading GBC setting that we consider, we assume both the transmitter and the receivers can harvest from a perennial ambient source. The receivers treat the transmitter as an RF energy source to meet additional energy requirements, if any. Another novel aspect in the model we propose is the inclusion of minimum-rate constraints in characterizing the fundamental limits of SWIPT systems. 
\par This paper is organized as follows. In Section \ref{S_Prel}, we present the system model and notation. Section \ref{S_Main} is devoted to explain the main results of this work. We derive the minimum-rate capacity region of the SWIPT system under consideration, with ideal, time-switching and power-splitting receivers. Numerical results are provided in Section \ref{S_NR}. We conclude in Section \ref{S_Conc}. {Proofs are sketched in the Appendices.} 

\section{ System Model and Notation}

\label{S_Prel}

\begin{figure}[h]
\begin{center}
\includegraphics[scale=0.44]{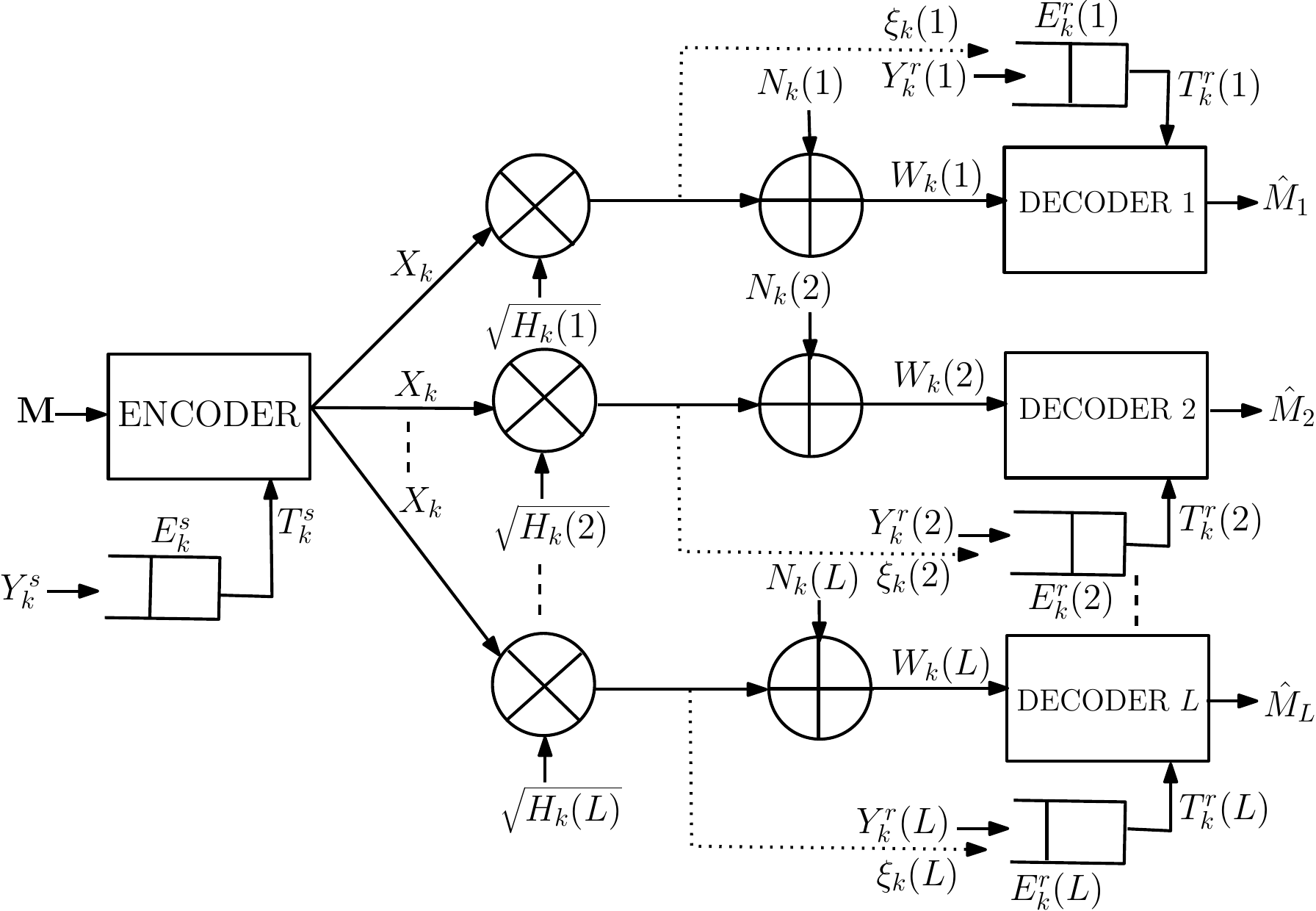}
\caption{A fading GBC with SWIPT.} \label{fig1}
\end{center}

\end{figure}

\subsection{Transmission with QoS Constraints}
Consider an energy harvesting transmitter equipped with an energy buffer (synonymous with battery or buffer) of infinite capacity. The transmitter could well be a \textit{green} base station harnessing renewable energy, like solar or wind energy. Alternatively, in the context of energy harvesting sensor networks, transmitter could represent a \textit{fusion centre} which multiple sensor nodes report to. 
\par We consider a time slotted system. In slot $k$, let $Y_k^s$ ($s$ indicates sender) denote the energy harvested by the transmitter from a renewable source. We assume the energy harvesting process $\{Y_k^s,~k\geq 1\}$ is stationary, ergodic. Let $\mathbb{E}[Y^s]$ denote the mean value of the energy harvesting process. Let $E_k^s$ denote the energy available in the transmitter's buffer at the beginning of slot $k$. In the model we consider, the harvested energy $Y_k^s$ can be used in the same slot and the remaining, if any, is stored in the buffer for future use. Let $T_k^s$ denote the energy used up by the transmitter in slot $k$. Thus, $T_k^s \leq \hat{E}_k^s \triangleq E_k^s+Y_k^s$. The energy in the buffer evolves according to $E_{k+1}^s=\hat{E}_{k}^s-T_k^s$.
\par The transmitter has $L$ messages to send, denoted by the message vector $\textbf{M} \triangleq (M_1,\hdots,M_L)$, to $L$ distant receivers, where $M_l$ is the message corresponding to receiver $l\in[1:L] \triangleq \{1,2,\hdots,L\}$. Simultaneously, the transmitter is powering each of the receivers. The receivers, in practice, could either be user mobile devices or low power sensor motes. Corresponding to the message vector $\mathbf{M}$,  a codeword of length $n$ $\big(X_1'(\mathbf{M}),\hdots,X_n'(\mathbf{M})\big) $,  is chosen. Since the transmitter is energy harvesting, transmitted symbol
in slot $k$ could be different from the codeword symbol {because of insufficient energy at the transmitter}.  The channel input symbol in slot $k$ is denoted as $X_k$. We note that the total energy used for transmission in slot $k$, $T_k^s=X_k^2=\sum\limits_{l=1}^LT_k^s(l)$, where $T_k^s(l)$ is the energy allocated for receiver $l$ in slot $k$. 
\par On account of the time varying nature of the underlying wireless channel, some users may be cut off from the {transmitter} for a certain duration of time depending upon the channel conditions. This is because the power allocation strategy which {ensures} the optimal \textit{long term} data rates will allocate zero transmission power in certain time slots to those users with low channel gains \cite{li2001capacity}. At the same time, it is not desirable to transmit at a target rate irrespective of the channel gains (essentially by a multi user variant of \textit{channel inversion}) as it reduces the permissible data rates \cite{li2001capacity2}. An alternative to the above approaches  is transmitting at a certain \textit{minimum instantaneous rate} irrespective of the channel conditions (there by ensuring certain fairness among receivers) and use the additional power to maximize the long term achievable data rates \cite{jindal2003capacity}.  Accordingly, let $\rho(l)$ be the minimum rate of transmission to be ensured {to receiver $l$}, irrespective of the channel conditions. The model parameter $\bm{\rho}\triangleq \big(\rho(1),\hdots,\rho(L)\big)$ dictates the quality of service guarantee on the joint data and power broadcast.
\subsection{The Channel Model }
\par The channel from the transmitter to the $l^{\text{th}}$ receiver is a fading channel corrupted by an independent and identically distributed (i.i.d.) additive Gaussian noise process $\{N_k(l),~k\}$ at the receiver. We denote the probability density function of $N_k(l)$ (having mean $0$ and variance $\sigma_l^2)$ by $\mathcal{N}(0,\sigma_l^2)$. The multiplicative channel gain from the transmitter to the  $l^{\text{th}}$ receiver in slot $k$ is denoted as $H_k(l)$. We assume that the fading process $\{\mathbf{H}_k,~k\geq 1\}$ is jointly stationary, ergodic, where $\mathbf{H}_k \triangleq \big(H_k(1),\hdots,H_k(L)\big) \in \mathcal{H}^L$ with stationary distribution $F_{\mathbf{H}}$. Here, $\mathcal{H}\subset \mathbb{R}^+$, the positive real axis and $\mathcal{H}^L$ is the Cartesian product $\mathcal{H}\times\hdots\times\mathcal{H}$ (L times).  In addition, we assume that the channel gains $H_k(l)$ and $H_k(j)$ are statistically independent for $l \neq j$ and are known to all the receivers and the transmitter {at time $k$}. We consider a block fading channel model wherein the channel gain from the transmitter to receiver $l$ remains fixed for the duration of a \textit{channel coherence time} $T_c(l)$. The codeword length $n$ is assumed to be an integer multiple of the least common multiple of $\{T_c(l)$, $l \in [1:L]\}$.  If $W_k(l)$ is the channel output at receiver $l$ in slot $k$, $W_k(l)=\sqrt{H_k(l)}X_k+N_k(l).$ 
\par Note that, ensuring a certain minimum non-zero transmission power to all users in all time slots (dictated by $\bm{\rho}$), potentially requires infinite power if the fading process, with {positive} probability, can take values arbitrarily close to zero. As an example, for the same reason,  the  zero outage capacity region for a Rayleigh fading GBC is \textit{null} (\cite{li2001capacity2}). To encompass those transmission schemes that require a finite average power and ensure non-zero minimum rate, we make the assumption that $\mathbb{E}[\frac{1}{H_k(l)}]<\infty$ for all $l$ and for all $k\geq 1$.      
\subsection{Receiver}
\par The receivers for SWIPT serve a dual purpose. There is a communication receiver to receive and decode the incoming data, and a rectenna module to harvest the RF energy. In slot $k$, receiver $l$ harvests $Y_k^r(l)$ ($r$ denotes receiver) from a surrounding perennial source. We assume that $\{Y_k^r(l),~k\geq 1\}$ is a stationary, ergodic process for each $l$ and is independent across receivers. Let $\mathbb{E}[Y^r(l)]$ denote its mean value. The receivers have an energy buffer of infinite capacity.  Let $E_k^r(l)$ denote the energy in $l^{\text{th}}$ receiver's buffer at the beginning of slot $k$. Let $\hat{E}_k^r(l)\triangleq E_k^r(l)+Y_k^r(l)$. There are various sources of energy consumption at the receivers. The front end of the communication receiver requires energy for filtering and other processing operations. This energy requirement, at receiver $l$, is modelled by a stationary, ergodic process $\{T_k^r(l),~k \geq 1\}$. We refer to {$\Delta_l \triangleq (\mathbb{E}[T^r(l)]-\mathbb{E}[Y^r(l)])^+$}  as the average energy deficit at receiver $l$, where $(.)^+=\max\{0,.\}$. Receiver $l$ harvests, on an average, $\Delta_l$ units of RF energy so as to compensate for the deficit. 
\par We now provide a brief description of the various receiver architectures considered in this work. An ideal receiver can harvest the incoming RF energy without \textit{distorting} the noise corrupted data symbol. Thus, the total energy harvested $D_k^r(l)$ at receiver $l$ in slot $k$ is $Y_k^r(l)+\xi_k(l)$, where $\xi_k(l)=\eta H_k(l)X_k^2$. Here, $\eta$ denotes the efficiency factor of {the} energy harvesting system (\cite{lu2015wireless}). The fundamental limits obtained in \cite{varshney2008transporting}, \cite{fouladgar2012transfer} are achievable only using ideal SWIPT receivers. In contrast, a time-switching receiver harvests RF energy in {a} slot, at the expense of erasing the corresponding noise corrupted data symbol. Let $\mathcal{I}_{l,k}$ denote the indicator function of the event that RF energy is harvested by receiver $l$ {in slot} $k$. Then, $D_k^r(l)=Y_k^r(l)+\mathcal{I}_{l,k}\xi_k(l)$. A power-splitting receiver \textit{divides} the incoming RF power between the communication module and the rectenna, non-adaptively. We refer to it as the constant fraction power-splitting receiver. If $0 \leq \pi_{\mathcal{E}} \leq 1$ is the fraction of power harvested in every slot, $D_k^r(l)=Y_k^r(l)+\pi_{\mathcal{E}}\xi_k(l)$.
\par In general, among the $L$ receivers, some receivers could be ideal and some others could be time-switching or power-splitting. But for the sake of exposition, we derive results assuming that all the receivers belong to one of the above kind. Our proof techniques readily yield the corresponding results for the general case as well.
\par In this work, we derive the fundamental limits in the framework propounded in \cite{rajesh2014capacity}. Specifically, the channel input and output processes  need not be stationary, since at time $k=0$, the transmitter and the receivers start operating with \textit{arbitrary} initial energy in their buffers. We consider \textit{power control policies} (to combat fading) at the transmitter such that the stochastic process $\{T_k^s,~k \geq 1\}$ is an  asymptotic mean stationary (AMS), ergodic process \cite{gray2011entropy}. We prove that, for the SWIPT system in place, the \textit{AMS capacity region} is equivalent to the \textit{Shannon capacity region} of a non energy harvesting system  with the same average power constraints.

\section{Minimum-Rate Capacity Region with Various Receiver Architectures} 
\label{S_Main}
In this section, we derive the minimum-rate capacity regions of the SWIPT system for the three receiver models. We begin {with} the following definitions. An energy management policy $T_{k}^s$ is called a {\textit{Markovian policy}}, if it is exclusively a function of the variables $\hat{E}_k^s$ and $\mathbf{H}_k$. In this work, we only consider policies that are Markovian. {We refer to such policies as Markovian because, if the processes $\{Y_k^s\}$, $\{\textbf{H}_k\}$, $\{Y_k^r(l),l\in[1:L]\}$, $\{T_k^r(l),l\in[1:L]\}$ are i.i.d., adopting Markov policies make the joint process $\{\big(Y_k^s,E_k^s,X_k^s,W_k(1),\hdots,W_k(L)\big)\}$ a Markov process}. We prove that such policies are \textit{optimal}  among the class of AMS, ergodic policies. A rate tuple $\mathbf{R}=\big(R(1),\hdots R(L)  \big)$ is {\textit{achievable}} if there {exists} a sequence of  $\big((2^{nR_1},\hdots, 2^{nR_L} ),n\big)$ codes, an encoding function,  a power controller so that for each joint fading state $\mathbf{h}=\big(h(1),\hdots h(L)\big)$, the instantaneous rate vector $\mathbf{R}(\mathbf{h}) \triangleq \big(R_1(\mathbf{h}),\hdots R_L(\mathbf{h})\big)$ satisfies $R_i(\mathbf{h})\geq \rho(i)$ and $\mathbb{E}_{\mathbf{H}}[R_i(\mathbf{H})]\geq R(i)$, $L$ decoders and energy harvesters, such that the average probability of decoding error (averaged over all possible realizations of codebooks) $P_{e}^{(n)} \rightarrow 0$ as $n \rightarrow \infty$. {\textit{Minimum-rate capacity region}} is the closure of the set of all achievable rate vectors.

We note that the minimum-rate vector $\bm{\rho}$ should be within the zero outage capacity region \cite{li2001capacity2} of a fading GBC with {the} peak power constraint corresponding to the minimum peak power imposed by the energy harvesting process. Since non zero minimum rates can be ensured only if the energy harvesting process $\{Y_k^s\}$ at the transmitter is such that $Y_k^s>\delta$ a.s. for some small $\delta>0$ for all $k$, we assume the same.
\subsection{ SWIPT System:  Ideal Receivers}
\label{SS_ID}
We now provide a characterization of the minimum-rate capacity region when all receivers are assumed to be ideal. Let $\Sigma_l(\mathbf{H})\triangleq H(l)T_l'^s(\mathbf{H})$, $\nu_l(\mathbf{H})\triangleq \sigma^2_l+\sum\limits_{j =1}^{L}H(j)T_j'^s(\mathbf{H})\mathds{1}_{\mathcal{E}_{l,j}}$, where $\mathds{1}_{\mathcal{E}_{l,j}}$ is the indicator function corresponding to the event $\mathcal{E}_{l,j} \triangleq \{\sigma_l^2H(j)>\sigma_j^2H(l)\}$, $T_l'^s$ is an energy allocation policy corresponding to receiver $l$ and $\text{SNR}_l(\mathbf{H})\triangleq \Sigma_l(\mathbf{H})/\nu_l(\mathbf{H})$. Define
\begin{flalign*}
\hspace{-20pt}
\mathcal{C}_i(\mathbf{T}'^s)  =  \Big\{\mathbf{R}:~ & {\rho}(l) \leq {R}(l) \leq  \mathbb{E}_\mathbf{H}\Big[\mathbf{C}_{i,l}(\mathbf{H})\Big],~ l \in [1:L] \Big\}.
\end{flalign*}
Here, $\mathbf{C}_{i,l}(\mathbf{H}) \triangleq \frac{1}{2}\log\big(1+\text{SNR}_l(\mathbf{H})\big)$ and  $\mathbf{T}'^s=(T_1'^s,\hdots T_L'^s)$. For $\bm{\Delta} \triangleq (\Delta_1,\hdots,\Delta_L)$, let $$\mathcal{T}^s(\bm{\Delta})  \triangleq  \Big\{\mathbf{T}'^s: \mathbb{E}_\mathbf{H}\big[\sum\limits_{l=1}^{L}T_l'^s(\mathbf{H})\big]\leq \mathbb{E}[Y^s],~R_l(\mathbf{H}) \stackrel{\text{a.s.}}{\geq} \rho(l),$$ $$\mathbb{E}_\mathbf{H}\big[\eta H(l)T_l'^s(\mathbf{H})\big] \geq \Delta_l,~ l \in [1:L]\Big\}.$$

\begin{thm}
\label{cap_region}
\emph{(Capacity Region with Ideal Receivers)}: The minimum rate capacity region is 
\begin{flalign*}
\hspace{35pt}\mathcal{C}_i(\bm{\Delta}) = \overline{\text{Conv}}\Bigg(\bigcup_{\mathbf{T}'^s \in \mathcal{T}^s(\bm{\Delta})} \mathcal{C}_i(\mathbf{T}'^s)\Bigg),
\end{flalign*}
where $\overline{\text{Conv}}(S)$ is the closure of convex hull of the set $S$.
\end{thm}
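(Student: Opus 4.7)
The plan is to establish achievability and converse separately, by reducing the energy-harvesting problem to an equivalent non-EH fading GBC with an average-power constraint and then porting the minimum-rate capacity characterization of \cite{jindal2003capacity}.

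\emph{Reduction to an average-power problem.} Using the framework of \cite{rajesh2014capacity}, I would first argue that for every AMS, ergodic Markovian transmit policy $\{T_k^s\}$, buffer stability forces $\mathbb{E}[T^s] \leq \mathbb{E}[Y^s]$, and conversely any power allocation $\mathbf{T}'^s(\mathbf{H})$ obeying this bound can be realized by a Markov policy on the enlarged state $(\hat{E}_k^s,\mathbf{H}_k)$; the loss from truncating in the rare slots when the buffer underflows is asymptotically negligible thanks to $Y_k^s > \delta$ a.s.\ and the ergodic theorem. This collapses the AMS capacity region onto the Shannon capacity region of a non-EH fading GBC with average power $\mathbb{E}[Y^s]$, pointwise minimum-rate vector $\bm{\rho}$, and the per-receiver RF-harvest constraints $\mathbb{E}_{\mathbf{H}}[\eta H(l) T_l'^s(\mathbf{H})] \geq \Delta_l$.

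\emph{Achievability.} Fix any $\mathbf{T}'^s \in \mathcal{T}^s(\bm{\Delta})$. In each joint fading state $\mathbf{h}$ I transmit a Gaussian superposition codeword with per-user powers $T_l'^s(\mathbf{h})$. Receiver $l$ performs successive interference cancellation, decoding and subtracting every stream $j$ with $\sigma_j^2 H(l) \geq \sigma_l^2 H(j)$ (those outside $\mathcal{E}_{l,j}$) and treating the remaining streams as Gaussian noise, giving the instantaneous rate $\tfrac{1}{2}\log(1+\text{SNR}_l(\mathbf{h})) = \mathbf{C}_{i,l}(\mathbf{h})$. Because the receivers are \emph{ideal}, harvesting $\eta H(l) X_k^2$ neither attenuates nor distorts the signal fed into the communication front-end, so the coding analysis and the RF-harvest constraint decouple completely. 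Concatenating codes across the independent block-fading epochs (permitted because $n$ is a multiple of the coherence times) and averaging against $F_\mathbf{H}$ delivers any point of $\mathcal{C}_i(\mathbf{T}'^s)$; taking the union over $\mathbf{T}'^s$ and closing under time-sharing between policies yields $\mathcal{C}_i(\bm{\Delta})$.

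\emph{Converse.} For each realization $\mathbf{h}$ I would invoke the Bergmans/El Gamal converse for the degraded Gaussian BC to deduce that any zero-error limit must admit in that state a power split $\{T_l'^s(\mathbf{h})\}$ for which $R_l(\mathbf{h}) \leq \mathbf{C}_{i,l}(\mathbf{h})$; the pointwise inequalities $R_l(\mathbf{h}) \geq \rho(l)$ and the average RF-harvest bound are baked into the definition of achievability. Combining Fano's inequality on each coherence block with the AMS ergodic theorem of \cite{gray2011entropy} upgrades these per-state bounds to $\mathbb{E}_\mathbf{H}[R_l(\mathbf{H})] \leq \mathbb{E}_\mathbf{H}[\mathbf{C}_{i,l}(\mathbf{H})]$ and places $\mathbf{R}$ in $\mathcal{C}_i(\mathbf{T}'^s)$ for some $\mathbf{T}'^s \in \mathcal{T}^s(\bm{\Delta})$.

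\emph{Main obstacle.} The most delicate step is verifying that Markov policies lose nothing relative to the broader class of AMS, ergodic energy controllers, and that the pointwise constraint $R_l(\mathbf{h}) \geq \rho(l)$ can be honoured simultaneously with the average-power and harvest constraints. Here the assumptions $\mathbb{E}[1/H(l)] < \infty$ and $Y_k^s > \delta$ a.s.\ do the heavy lifting: they ensure that a channel-inversion component delivering $\bm{\rho}$ in every fading state has finite average cost, so that $\mathcal{T}^s(\bm{\Delta})$ is non-empty and there remains a genuine surplus of power with which to optimize long-term rates and to meet $\bm{\Delta}$.
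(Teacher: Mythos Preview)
Your proposal is correct and follows essentially the same route as the paper: an $\epsilon$-backoff/truncation argument to reduce the energy-harvesting transmitter to an average-power constraint $\mathbb{E}[Y^s]$, per-fading-state Gaussian superposition with successive cancellation for achievability, and the non-EH fading GBC converse (via the cumulative energy inequality $\tfrac{1}{n}\sum_k T_k^s \le \tfrac{1}{n}\sum_k Y_k^s$) for the outer bound. The only point the paper makes more explicit than you do is that, because the truncated input process is merely AMS rather than stationary, the typicality analysis rests on Barron's AEP for AMS ergodic sources \cite{barron1985strong} rather than on the standard i.i.d.\ AEP; you gesture at this by citing \cite{rajesh2014capacity} and \cite{gray2011entropy}, but in a full write-up you would want to spell out that the finite-variance and finite-mutual-information regularity conditions for that AEP are met here.
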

\begin{proof}
See Appendix A.
\end{proof}
Since the capacity region $\mathcal{C}_i(\bm{\Delta})$ is convex, we can obtain the boundary points of $\mathcal{C}_i(\bm{\Delta})$ by solving the following optimization problem:
\begin{flalign*}
\hspace{20pt}
&\max_{\mathbf{T}'^s(.)} ~ \sum\limits_{l=1}^L \mu(l) \mathbb{E}_{\mathbf{H}}\big[R_l\big(T_l'^s(\mathbf{H})\big)\big],\\
& \text{s.t.}~ \mathbb{E}_\mathbf{H}\big[\sum\limits_{l=1}^{L}T_l'^s(\mathbf{H})\big]\leq \mathbb{E}[Y^s], ~ \forall l, \\
& ~R_l(\mathbf{H}) \stackrel{\text{a.s.}}{\geq} \rho(l),~\forall ~ l,\\
& \mathbb{E}_\mathbf{H}\big[\eta H(l)T_l'^s(\mathbf{H})\big] \geq {\Delta_l},~\forall ~ l.
\end{flalign*}
Let $\Pi(.)$ be a permutation function on $[1:L]$ such that $H\big(\Pi(1)\big)/\sigma_{\Pi(1)}^2\geq H\big(\Pi(2)\big)/\sigma_{\Pi(2)}^2\geq \hdots\geq H\big(\Pi(L)\big)/\sigma_{\Pi(L)}^2.$ Also, let $T_{m,l}'$ ($m$ indicates minimum) denote the energy expended for maintaining the minimum rate $\rho(l)$ and $T_{e,l}'$ ($e$ denotes excess) be the excess energy such that $T_{m,l}'+T_{e,l}'=T_{l}'^s$. Then, with additional algebraic manipulation, it is easy to show that the above optimization is equivalent {to the optimization problem:}
\begin{flalign*}
\hspace{20pt}
&\max_{\mathbf{T}'_e} ~ \sum\limits_{l=1}^L \Big[\mu(l)\rho(l) + \mathbb{E}_{\mathbf{H}_{\text{ef}}}\big[C_{l}^{\text{ef}}(\mathbf{H}_{\text{ef}})\big]\Big],\\
& \text{s.t.}~ \mathbb{E}_{\mathbf{H}_{\text{ef}}}\big[\sum\limits_{l=1}^{L}T_{e,l}'(\mathbf{H}_{\text{ef}})\big]\leq \mathbb{E}[Y^s], ~ \forall l, \\
& {\mathbb{E}_{\mathbf{H}_{\text{ef}}}\big[\eta H_{\text{ef}}(l)T_{e,l}'(\mathbf{H}_{\text{ef}})\big] \geq {\Delta_{l,\text{ef}}},~\forall ~ l},
\end{flalign*}
where, $C_{l}^{\text{ef}}(\mathbf{H}_{\text{ef}}) \triangleq \frac{1}{2}\log\big(1+\text{SNR}_{l,\text{ef}}(\mathbf{H}_{\text{ef}})\big),$ $\text{SNR}_{l,\text{ef}}(\mathbf{H}_{\text{ef}})=\Sigma_{l,\text{ef}}(\mathbf{H}_{\text{ef}})/\nu_{l,\text{ef}}(\mathbf{H}_{\text{ef}})$,  $\Sigma_{l,\text{ef}}(\mathbf{H}_{\text{ef}})=H_{\text{ef}}\big(\Pi(l)\big)T_{e,\Pi(l)}'(\mathbf{H}_\text{ef})$ and $\nu_{l,\text{ef}}(\mathbf{H}_{\text{ef}})=\sigma_{l,\text{ef}}^2+\sum\limits_{j <l}H_{\text{ef}}\big(\Pi(j)\big)T_{e,\Pi(j)}'(\mathbf{H}_{\text{ef}})$. We refer to $\mathbf{H}_{\text{ef}}$, $(\sigma_{l,\text{ef}}^2,l \in [1:L])$ as the effective fading coefficients and noise variances respectively,  and can be obtained as in \cite{jindal2003capacity}. {Also, $\mathbf{T}'_e \triangleq (T_{e,1}',\hdots,T_{e,L}')$, $\Delta_{l,\text{ef}}=\Delta_l- \mathbb{E}_{\mathbf{H}}\big[\eta H(l)T_{m,l}'(\mathbf{H})\big]$. As an example, for the two receiver case, let $q$ denote the  probability of the event $\mathcal{E}_{1,2}$ and let, $q_c=(1-q)$. Denote, for $l \in \{1,2\}$, $p_l=(e^{2\rho(l)}-1)$. Then, under the event $\mathcal{E}_{1,2}$, $\sigma_{1,\text{ef}}^2=\sigma_1^2$, $\sigma_{2,\text{ef}}^2=(\sigma_2^2-\sigma_1^2)e^{-2\rho(1)}+\sigma_1^2$,  $H_{\text{ef}}(l)=H(l)e^{-2\rho(1)-2\rho(2)},$  $\Delta_{1,\text{ef}}=\Delta_1-\sigma_1^2p_1-\sigma_2^2p_1p_2q_c$, $\Delta_{2,\text{ef}}=\Delta_2-\sigma_2^2p_2-\sigma_1^2p_1p_2q$. For the complement of the event $\mathcal{E}_{1,2}$, the indices are swapped to the obtain corresponding expressions.}
\begin{rem}
\label{rem1}
 As a consequence of Theorem \ref{cap_region}, we can recover various important results as special cases. For instance, the capacity region of a fading GBC with an energy harvesting transmitter, and without power transfer and minimum rate constraints, is readily obtained. We  also obtain the capacity of a fading AWGN channel with energy harvesting transmitter, sending simultaneously a delay sensitive data (at a pre specified rate $\rho$) and a delay tolerant data. The result can be obtained using the proof of Theorem \ref{cap_region}, but using two separate codebooks (for each class of data) in conjunction with the rate splitting argument \cite{rimoldi1996rate}.   
 \end{rem}

\subsection{SWIPT System: Time-Switching Receivers}
\label{S_TS}
In this section, we  consider the SWIPT system with time-switching receivers.  The corresponding capacity region is referred to as the minimum-rate erasure capacity region. The terminology signifies the fact that harvesting energy from a data bearing symbol (using time-switching receiver) erases its information content. {In time-switching case, even though there is \textit{no minimum rate} at those times}, the constraint ensures that receivers can harvest a certain minimum RF power. An important aspect of our model is that, without loss of \textit{optimality}, each receiver can decide when to harvest RF energy independent of other receivers' decision and the transmitter not knowing the same.  The probability with which receiver $l$ decides to harvest in any slot is dictated by $\Delta_l$.  Let $\pi_{\mathcal{E}}(l) \triangleq \Delta_l/ \mathbb{E}_\mathbf{H}\big[\eta H(l)T_l'^s(\mathbf{H})\big]$  and denote $\pi^c_{\mathcal{E}}(l)=1-\pi_{\mathcal{E}}(l)$. Let
\begin{flalign*}
\hspace{-20pt}
\mathcal{C}_t^e(\mathbf{T}'^s)  \triangleq  \Big\{\mathbf{R}:~ & {\rho}(l) \leq {R}(l) \leq  \mathbb{E}_\mathbf{H}\Big[\mathbf{C}_{t,l}(\mathbf{H})\Big],~ l \in [1:L] \Big\},
\end{flalign*}
where, $\mathbf{C}_{t,l}(\mathbf{H}) \triangleq \frac{\pi_{\mathcal{E}}^c(l)}{2}\log\big(1+\text{SNR}_l(\mathbf{H})\big)$.

\begin{thm}
\label{Th_TSR}
\emph{(Capacity Region with Time-Switching Receivers)}: 
\begin{flalign*}
\hspace{35pt}\mathcal{C}_t^e(\bm{\Delta}) = \overline{\text{Conv}}\Bigg(\bigcup_{\mathbf{T}'^s \in \mathcal{T}^s(\bm{\Delta})} \mathcal{C}_t^e(\mathbf{T}'^s)\Bigg),
\end{flalign*}
is the minimum-rate erasure capacity region.
\end{thm}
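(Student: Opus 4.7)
The plan is to reduce the time-switching problem to an erasure-layered version of the ideal-receiver problem of Theorem \ref{cap_region}. For achievability, I would reuse the same transmitter energy allocation policy $\mathbf{T}'^s \in \mathcal{T}^s(\bm{\Delta})$ and the same superposition encoding used for the ideal-receivers case. Each receiver $l$ locally generates an i.i.d.\ Bernoulli$(\pi_{\mathcal{E}}(l))$ indicator sequence $\{\mathcal{I}_{l,k}\}$; whenever $\mathcal{I}_{l,k}=1$ the channel output is routed to the rectenna (and that codeword symbol is treated as erased), and otherwise it is passed to the decoder. By the strong law of large numbers, the average RF energy harvested at receiver $l$ converges almost surely to $\pi_{\mathcal{E}}(l)\,\mathbb{E}_{\mathbf{H}}[\eta H(l) T_l'^s(\mathbf{H})] = \Delta_l$, meeting the deficit constraint exactly. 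The induced channel at receiver $l$ is the ideal broadcast channel of Theorem \ref{cap_region} cascaded with an independent i.i.d.\ erasure process of probability $\pi_{\mathcal{E}}(l)$; a typicality decoder restricted to the non-erased indices achieves $\mathbf{C}_{t,l}(\mathbf{H}) = \frac{\pi_{\mathcal{E}}^c(l)}{2}\log(1+\text{SNR}_l(\mathbf{H}))$ per slot at fading state $\mathbf{H}$, while the instantaneous minimum-rate constraint $R_l(\mathbf{H}) \geq \rho(l)$ on non-erased slots is inherited from membership in $\mathcal{T}^s(\bm{\Delta})$.

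For the converse, I would graft the erasure structure onto the converse of Theorem \ref{cap_region}. Given any admissible time-switching scheme, let $\mathcal{I}_{l,k}$ be receiver $l$'s own harvesting indicator (possibly adaptive in its past observations). The average energy-deficit requirement forces $\liminf_n \tfrac{1}{n}\sum_{k=1}^n \mathbb{E}[\mathcal{I}_{l,k} \eta H_k(l) T_k^s(l)] \geq \Delta_l$, which in turn forces the harvesting fraction to be at least $\pi_{\mathcal{E}}(l)$ for the corresponding $\mathbf{T}'^s$. Revealing the realization of $\{\mathcal{I}_{l,k}\}$ to receiver $l$ as genie side-information only enlarges the rate region; applying Fano's inequality and the AMS/ergodicity machinery of Appendix A to the non-erased subsequence then yields $R(l) \leq \mathbb{E}_{\mathbf{H}}[\mathbf{C}_{t,l}(\mathbf{H})]$ for some $\mathbf{T}'^s \in \mathcal{T}^s(\bm{\Delta})$. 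Convex hull and closure arise from the usual time-sharing over admissible policies.

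The main obstacle is justifying the claim that each receiver can choose its harvesting schedule \emph{without} the transmitter's knowledge and \emph{without} any rate loss beyond the factor $\pi_{\mathcal{E}}^c(l)$. This ultimately rests on two facts: first, an i.i.d.\ erasure process independent of the channel input has capacity $(1-\pi_{\mathcal{E}}(l))$ times the underlying capacity even when the erasure pattern is known only at the receiver, so the transmitter's ignorance of $\{\mathcal{I}_{l,k}\}$ is costless; and second, a time-sharing/Jensen argument shows that the receiver's optimal harvesting rule can be restricted, without loss of optimality, to an i.i.d.\ Bernoulli process independent of $\mathbf{H}_k$, since the rate $\pi \mapsto (1-\pi)\cdot \tfrac{1}{2}\log(1+\text{SNR})$ is linear in $\pi$ while the energy constraint is linear in the harvesting probability. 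Together these observations close the gap between the achievability and converse bounds and deliver the stated characterization.
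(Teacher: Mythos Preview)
Your achievability argument is essentially identical to the paper's: both reuse the superposition encoding of Theorem~\ref{cap_region}, have each receiver locally run an i.i.d.\ Bernoulli$(\pi_{\mathcal{E}}(l))$ harvesting schedule independent of the transmitter, treat harvested slots as erasures, and decode on the surviving subsequence. Your extra discussion of why transmitter ignorance of the erasure pattern is costless and why i.i.d.\ Bernoulli harvesting suffices is a useful elaboration that the paper leaves implicit.

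The one point of divergence is the converse genie. You propose revealing $\{\mathcal{I}_{l,k}\}$ to receiver $l$, but receiver $l$ is the entity that \emph{generates} $\{\mathcal{I}_{l,k}\}$, so this side information is vacuous and does not by itself give a clean outer bound. The paper instead gives the \emph{encoder} non-causal knowledge of the erasure locations; with that genie the transmitter may as well send zero on erased slots, collapsing the problem to a fading GBC on the fraction $\pi_{\mathcal{E}}^c(l)$ of non-erased slots, after which the converse of Theorem~\ref{cap_region} applies directly. Swapping your receiver-side genie for the paper's encoder-side genie tightens the argument with no other changes needed.
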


\begin{proof}
See Appendix B.
\end{proof}

\subsection{SWIPT System:  Power-Splitting Receiver}
\label{S_PS}
At receiver $l$, let $\pi_{\mathcal{E}}(l)$ fraction of energy be harvested  in every slot, where $\pi_{\mathcal{E}}(l)$ is defined as in the time-switching case. Let $\tilde{\nu}_l(\mathbf{H})=\sigma^2_l+\sum\limits_{j =1}^{L}\pi_{\mathcal{E}}^c(j)H(j)T_j'^s(\mathbf{H})\mathds{1}_{\tilde{\mathcal{E}}_{l,j}}$, where $\mathds{1}_{\tilde{\mathcal{E}}_{l,j}}$ is the indicator function corresponding to the event $\{\sigma_l^2\pi_{\mathcal{E}}^c(j)H(j)>\sigma_j^2\pi_{\mathcal{E}}^c(l)H(l)\}$. Also, let $ \overset{\sim}{\text{SNR}}_l(\mathbf{H}) =\Sigma_l(\mathbf{H})/\tilde{\nu}_l(\mathbf{H})$.  Define
\begin{flalign*}
\hspace{-20pt}
\mathcal{C}_p(\mathbf{T}'^s)  \triangleq  \Big\{\mathbf{R}:~ & {\rho}(l) \leq {R}(l) \leq  \mathbb{E}_\mathbf{H}\Big[\mathbf{C}_{p,l}(\mathbf{H})\Big],~ l \in [1:L] \Big\},
\end{flalign*}
where, $\mathbf{C}_{p,l}(\mathbf{H}) \triangleq \frac{1}{2}\log\big(1+\pi_{\mathcal{E}}^c(l)\overset{\sim}{\text{SNR}}_l(\mathbf{H})\big)$.
\begin{thm}
\label{Th_PSR1}
\emph{(Capacity Region with Power-Splitting Receivers)}: The closure of
\begin{flalign*}
\hspace{35pt}\mathcal{C}_p(\bm{\Delta}) = \overline{\text{Conv}}\Bigg(\bigcup_{\mathbf{T}'^s \in \mathcal{T}^s(\bm{\Delta})} \mathcal{C}_p(\mathbf{T}'^s)\Bigg),
\end{flalign*}
 is the minimum-rate capacity region with constant fraction power-splitting receivers.
\end{thm}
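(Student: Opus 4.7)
The plan is to reduce the power-splitting SWIPT broadcast channel to an equivalent fading GBC and then invoke Theorem \ref{cap_region}. Since receiver $l$ diverts a constant fraction $\pi_{\mathcal{E}}(l)$ of the incoming RF signal to its rectenna, the post-splitter observation available to its demodulator is $\sqrt{\pi_{\mathcal{E}}^c(l)}\sqrt{H_k(l)}X_k+N_k(l)$, where $N_k(l)\sim\mathcal{N}(0,\sigma_l^2)$ models the noise injected after the splitter. This defines an equivalent fading gain $\widetilde{H}_k(l)\triangleq \pi_{\mathcal{E}}^c(l)H_k(l)$, and the events $\tilde{\mathcal{E}}_{l,j}$ and the expression $\tilde{\nu}_l(\mathbf{H})$ are read off as the successive-decoding-order events and the residual interference on this equivalent channel.

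For achievability, fix $\mathbf{T}'^s\in\mathcal{T}^s(\bm{\Delta})$ and set $\pi_{\mathcal{E}}(l)=\Delta_l/\mathbb{E}_\mathbf{H}[\eta H(l)T_l'^s(\mathbf{H})]\in[0,1]$, which is admissible because $\mathbf{T}'^s\in\mathcal{T}^s(\bm{\Delta})$. With this choice the average RF energy harvested at receiver $l$ is exactly $\Delta_l$, meeting the deficit constraint. The transmitter then reuses the superposition-coding plus AMS, ergodic power-control strategy from the proof of Theorem \ref{cap_region}, now applied on the equivalent fading GBC with gains $\{\widetilde{H}_k(l)\}$: each receiver performs successive decoding in the order dictated by the effective SNRs $\widetilde{H}_k(l)/\sigma_l^2$, yielding the per-user instantaneous rates $\mathbf{C}_{p,l}(\mathbf{H})$. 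Since receiver-side splitting is local and does not disturb the transmitter's buffer dynamics or the minimum-rate feasibility built into $\mathcal{T}^s(\bm{\Delta})$, the achievability argument of Theorem \ref{cap_region} carries over verbatim, and time-sharing produces the convex hull and its closure.

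For the converse, any achievable code sequence satisfying the minimum-rate and deficit constraints induces an AMS, ergodic energy-allocation policy. From the demodulator's standpoint the post-splitter observation is a standard fading GBC observation with gain $\widetilde{H}_k(l)$, so the converse of Theorem \ref{cap_region} on this equivalent channel confines the rate tuple to $\mathcal{C}_p(\mathbf{T}'^s)$ for some $\mathbf{T}'^s$ satisfying both $\mathbb{E}_\mathbf{H}[\sum_l T_l'^s(\mathbf{H})]\leq \mathbb{E}[Y^s]$ and $\mathbb{E}_\mathbf{H}[\eta H(l)T_l'^s(\mathbf{H})]\geq \Delta_l$. Convexification via time-sharing finishes the reverse inclusion.

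The principal obstacle I foresee is justifying that \emph{constant-fraction}, non-adaptive power splitting is without loss of optimality, and relatedly that the decoding-order indicator $\mathds{1}_{\tilde{\mathcal{E}}_{l,j}}$, whose defining inequality depends jointly on $\pi_{\mathcal{E}}^c(l)$ and $\pi_{\mathcal{E}}^c(j)$, remains consistent with successive-decoding optimality at every joint fading state $\mathbf{H}$. The resolution is that once the fractions $\{\pi_{\mathcal{E}}(l)\}$ are pinned down by the deficits and the policy, the equivalent channel is an ordinary fading GBC and the ordering indicated by $\tilde{\mathcal{E}}_{l,j}$ is precisely the one that maximises the weighted sum-rate at each $\mathbf{H}$; the standard GBC converse therefore yields no outer bound tighter than $\mathcal{C}_p(\bm{\Delta})$, and any attempt by the receiver to retain more than $\pi_{\mathcal{E}}(l)$ fraction merely decreases $\widetilde{H}_k(l)$ and shrinks its feasible rate region.
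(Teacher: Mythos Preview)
Your reduction is exactly the paper's argument: the authors' one-line proof is that Theorem~\ref{cap_region} applies verbatim once the channel gain to receiver $l$ is scaled by $\pi_{\mathcal{E}}^c(l)$, which is precisely your equivalent fading GBC with gains $\widetilde{H}_k(l)=\pi_{\mathcal{E}}^c(l)H_k(l)$. Your elaboration of the achievability and converse via this equivalent channel is sound; the only superfluous step is your final paragraph's worry about the optimality of constant-fraction splitting, which is unnecessary here because the theorem is stated \emph{for} the constant-fraction architecture rather than claiming it is optimal among all splitting rules.
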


\begin{proof}
The proof follows from the proof of Theorem \ref{cap_region} with the channel gain from the transmitter to $l^{\text{th}}$ receiver scaled by a factor $\pi_{\mathcal{E}}^c(l)$.
\end{proof}
 The boundary points of $\mathcal{C}_t^e(\bm{\Delta})$ and $\mathcal{C}_p(\bm{\Delta})$ can be obtained by solving optimization problems similar to that for the ideal case. 
 \begin{rem}
 In the absence of energy harvesting constraints, it is well known that a GBC and a Gaussian multiple access channel (GMAC) are \textit{duals} of each other \cite{jindal2004duality}. A similar result can be proved for these channels when powered by energy harvesting sources. On account of space constraints, we choose to avoid the technical details. Rather, we provide a numerical example in Section \ref{S_NR}. 
 \end{rem}

\section{Numerical Results}
\label{S_NR}
{In this section, we provide numerical examples to compare the minimum-capacity region of the SWIPT system  for the ideal, time-switching (TS) and power-splitting (PS) receivers. The time slot is considered in multiples of $1 \mu$sec. We consider a 2 user GBC with $\sigma_1^2=0.8$, $\sigma_2^2=1.6$. We assume i.i.d. fading, independent across users. The fading distribution at each user is chosen such that $\mathbb{E}[H(1)]=0.8$, $\mathbb{E}[H(2)]=0.5$. We consider a \textit{discretized} Rayleigh fading channel, obtained as follows: Fix an appropriate subset of the positive real axis. We choose the interval $[0,10]$. We discretize this set in steps of $.1$. For the channel between the transmitter and receiver $l$, the probability $p_{l}(h)$, $h \in \{.1,.2,\hdots,9.9\}$, is chosen such that $p_{l}(h)=Pr\big(H(l) \in [h-.1,h]\big)$ and $p_l(10)=Pr\big(H(l) \geq 9.9\big)$, where $H(l)$ is exponentially distributed. We take $\mathbb{E}[Y^s]=10W$. Fix $\rho(1)=300$ Kbps,  $\rho(2)=150$ Kbps. Let $\mathbb{E}[T^r(1)]=90\mu$W, $\mathbb{E}[T^r(2)]=50\mu$W and the energy deficits  $\Delta_1=60\mu\text{W} \approx -12$dBm, $\Delta_2=30\mu\text{W}\approx -15 $dBm. The efficiency factor $\eta$ is fixed to $10^{-4}$. }
\par{ In Figure \ref{all_rate_reg}, we compare the minimum-rate capacity regions of the SWIPT system with all receivers ideal, all TS and all PS, against the capacity region of \textit{all ideal receivers} case without minimum rate constraints and achievable rates without RF power transfer. For the example we consider, RF power transfer with ideal receivers enhances the data rates by 200$\%$ for receiver 1 and about $67\%$ for receiver 2, with respect to that with no RF power transfer. Also, due to concavity of the achievable rates as a function of the power expended, $\mathcal{C}_t^e(\bm{\Delta}) \subseteq \mathcal{C}_p(\bm{\Delta})$. 

\begin{figure}[h]
\begin{center}
\includegraphics[scale=0.33]{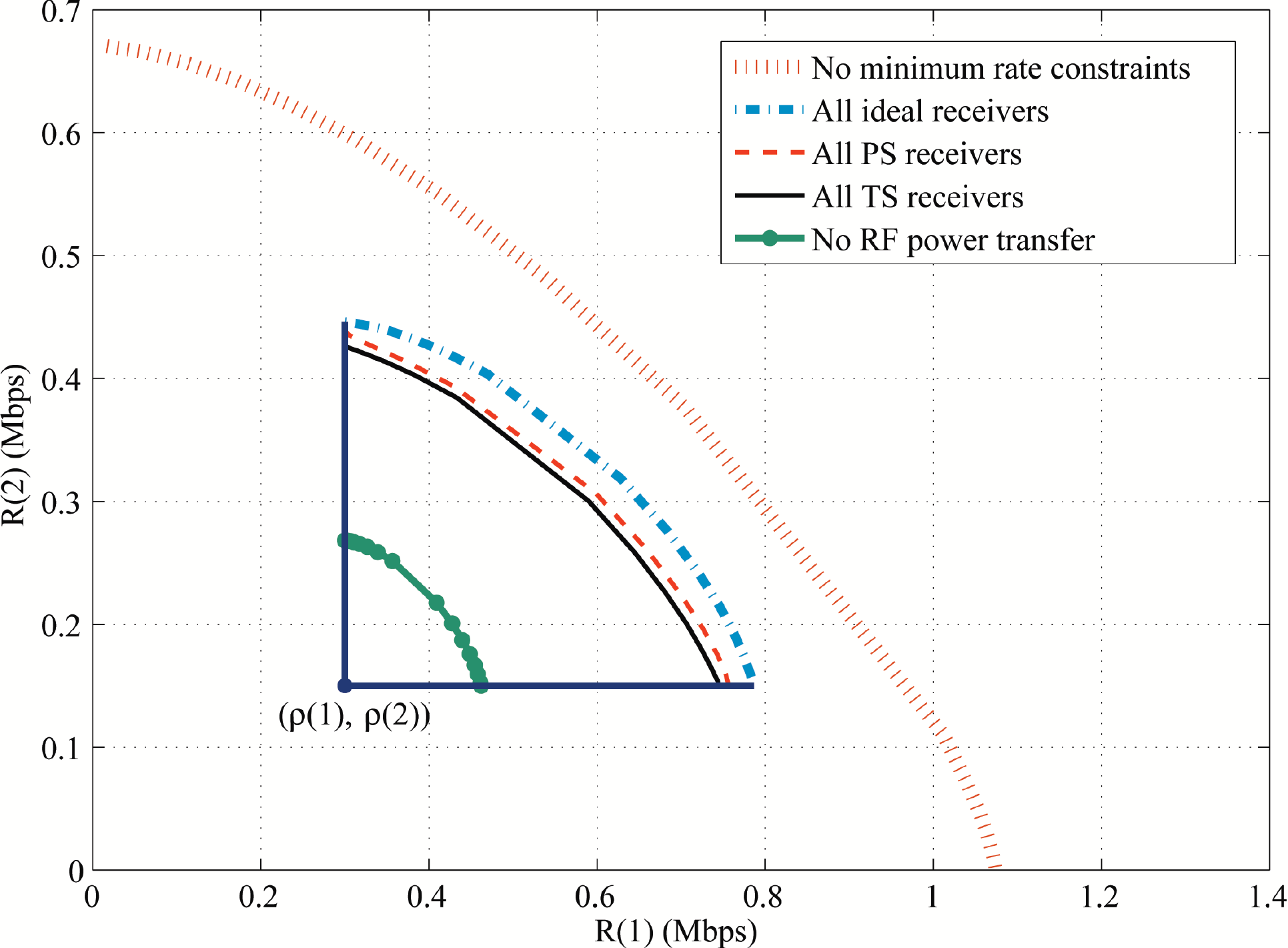}
\caption{ $\mathcal{C}_i(\bm{\Delta})$, $\mathcal{C}_t^e(\bm{\Delta})$ and $\mathcal{C}_p(\bm{\Delta})$ versus  capacity region without minimum rate constraints, achievable rates without RF power transfer.} \label{all_rate_reg}
\end{center}
\end{figure}

\par { We also compute the capacity region for a more realistic scenario in which receiver 1 is PS and receiver 2 is TS. We compare it with $\mathcal{C}_t^e(\bm{\Delta})$ and $\mathcal{C}_p(\bm{\Delta})$, in Figure \ref{PS_TS_PSTS_Cap_reg}, for two different values of $\mathbb{E}[Y^s]$. For the same amount of energy harvested at the transmitter, on average, a relatively wide range of energy deficit values at the receivers can be catered without \textit{much} rate loss. }
 \begin{figure}[h]
\begin{center}
\includegraphics[scale=0.33]{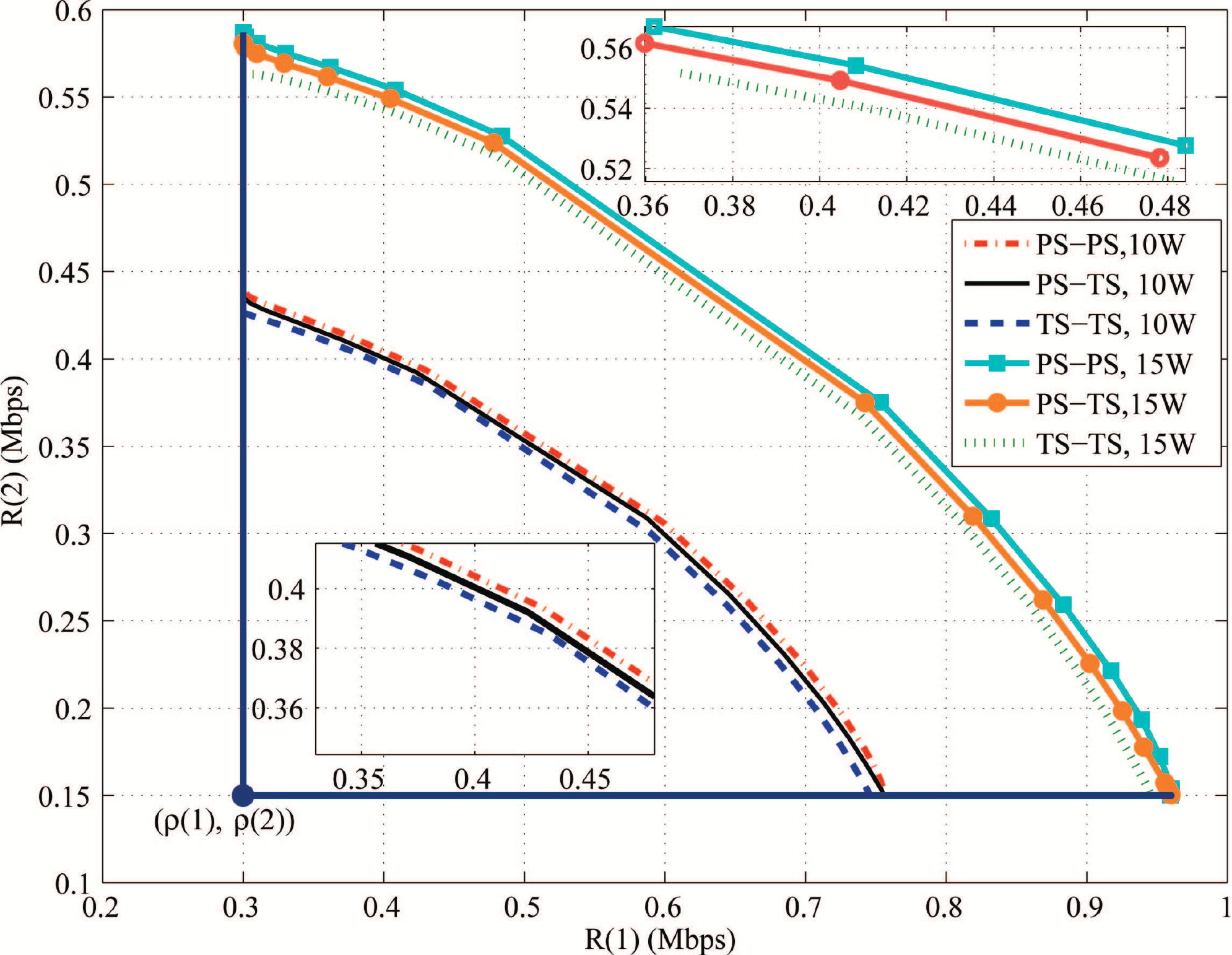} 
\caption{Comparison of capacity regions with receiver architectures all same and all different, for  $\mathbb{E}[Y^s]=10W$, $\mathbb{E}[Y^s]=15W$. } \label{PS_TS_PSTS_Cap_reg}
\end{center}
\end{figure}   
{Next, we fix $\mathbb{E}[Y^s]$ value and compare the data rates achievable for various values of energy deficits at the receiver. In Figure \ref{rr_TS_D}, we exemplify the change in $\mathcal{C}_t^e(\bm{\Delta})$ as a function of $\bm{\Delta}$. A similar plot for $\mathcal{C}_p(\bm{\Delta})$ is provided in Figure \ref{rr_PS_D}.} 

\begin{figure}[h]
\begin{center}
\includegraphics[scale=0.33]{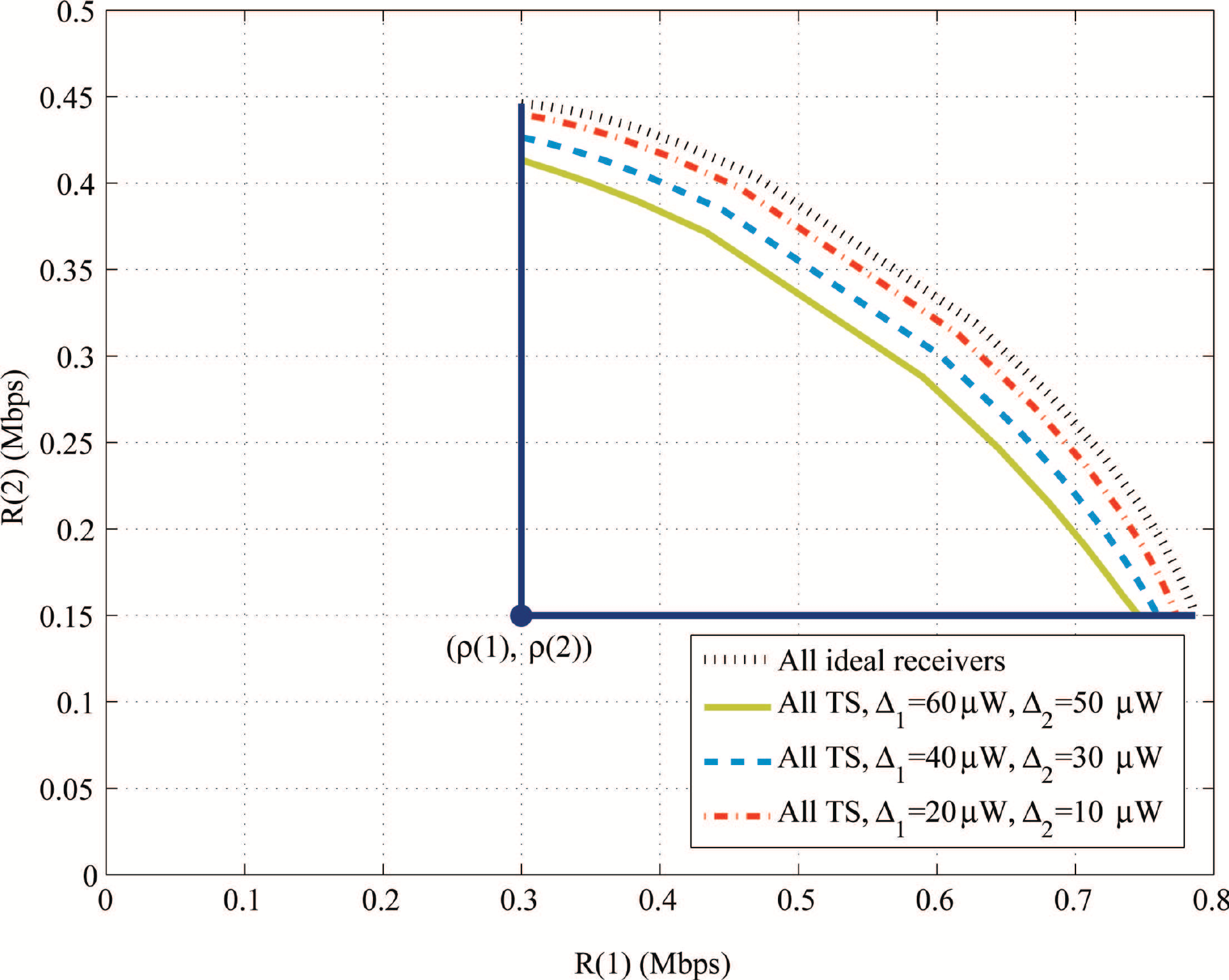}
\caption{Comparison of capacity regions for various values of $\bm{\Delta}$ with all TS receivers. } \label{rr_TS_D}
\end{center}
\end{figure} 

\begin{figure}[h]
\begin{center}
\includegraphics[scale=0.33]{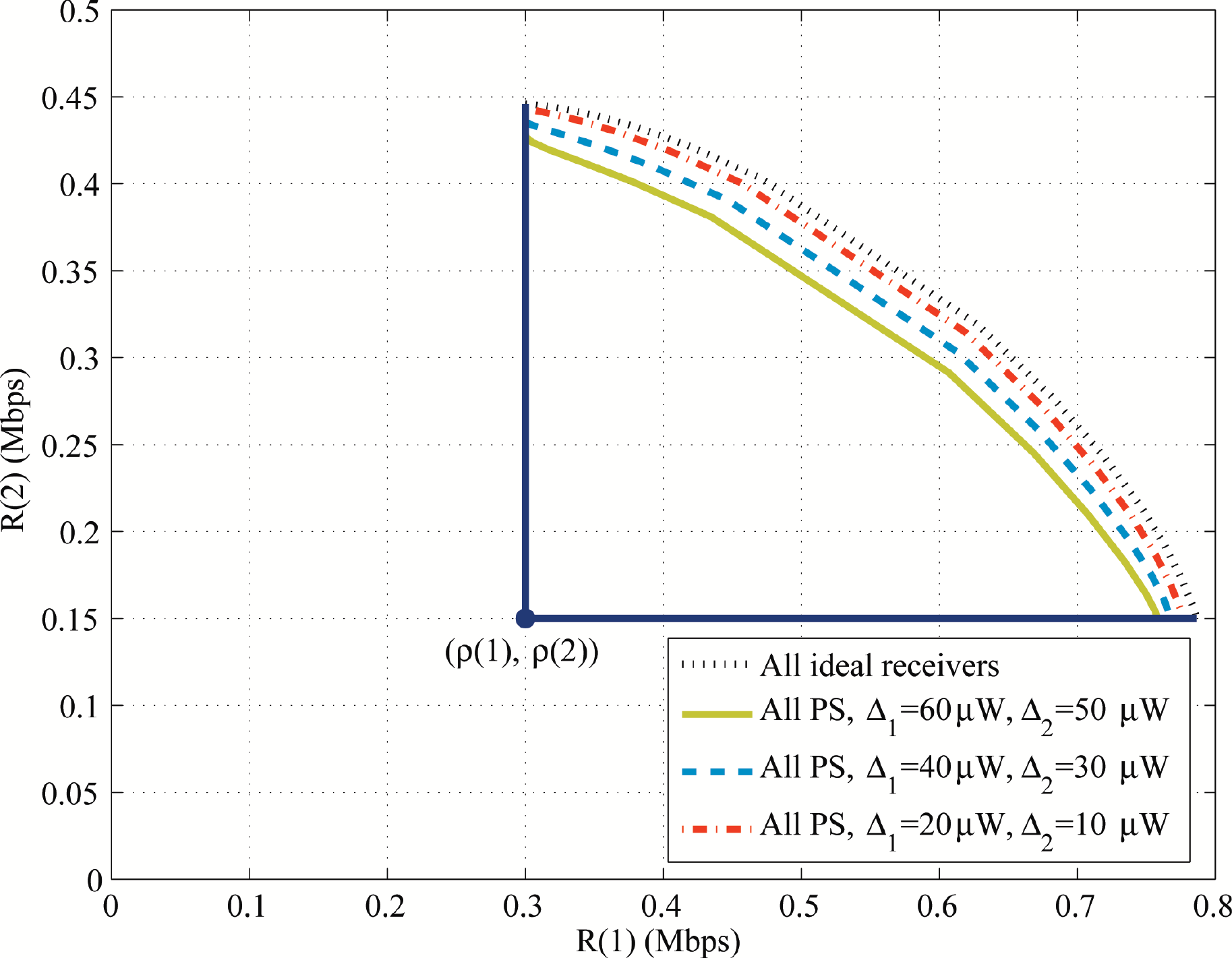}
\caption{Comparison of capacity regions for various values of $\bm{\Delta}$ with all PS receivers. } \label{rr_PS_D}
\end{center}
\end{figure}

\par {Finally, in Figure \ref{duality_fig}, we obtain the minimum-rate capacity region (with minimum rates as before) of a 2 user fading GMAC, with average energy harvested at the transmitters $\mathbb{E}[Y^s(1)]=6$W, $\mathbb{E}[Y^s(2)]=4$W, average energy consumed at the receiver (assumed to be ideal) $\mathbb{E}[T^r]=90\mu$W, energy deficit at the receiver $\Delta=60\mu$W, receiver noise variance $\sigma^2=1$, from that of a GBC with $\mathbb{E}[Y^s]=10$W, $\mathbb{E}[T^r(l)]=90\mu$W, $\Delta_l=60\mu$W and $\sigma_l^2=\sigma^2$, $l=1,2$. Even though the capacity regions are readily obtained  via duality, the method does not bring out the structure of the corresponding optimal power control policies.  
\begin{figure}[h]
\begin{center}
\includegraphics[scale=0.33]{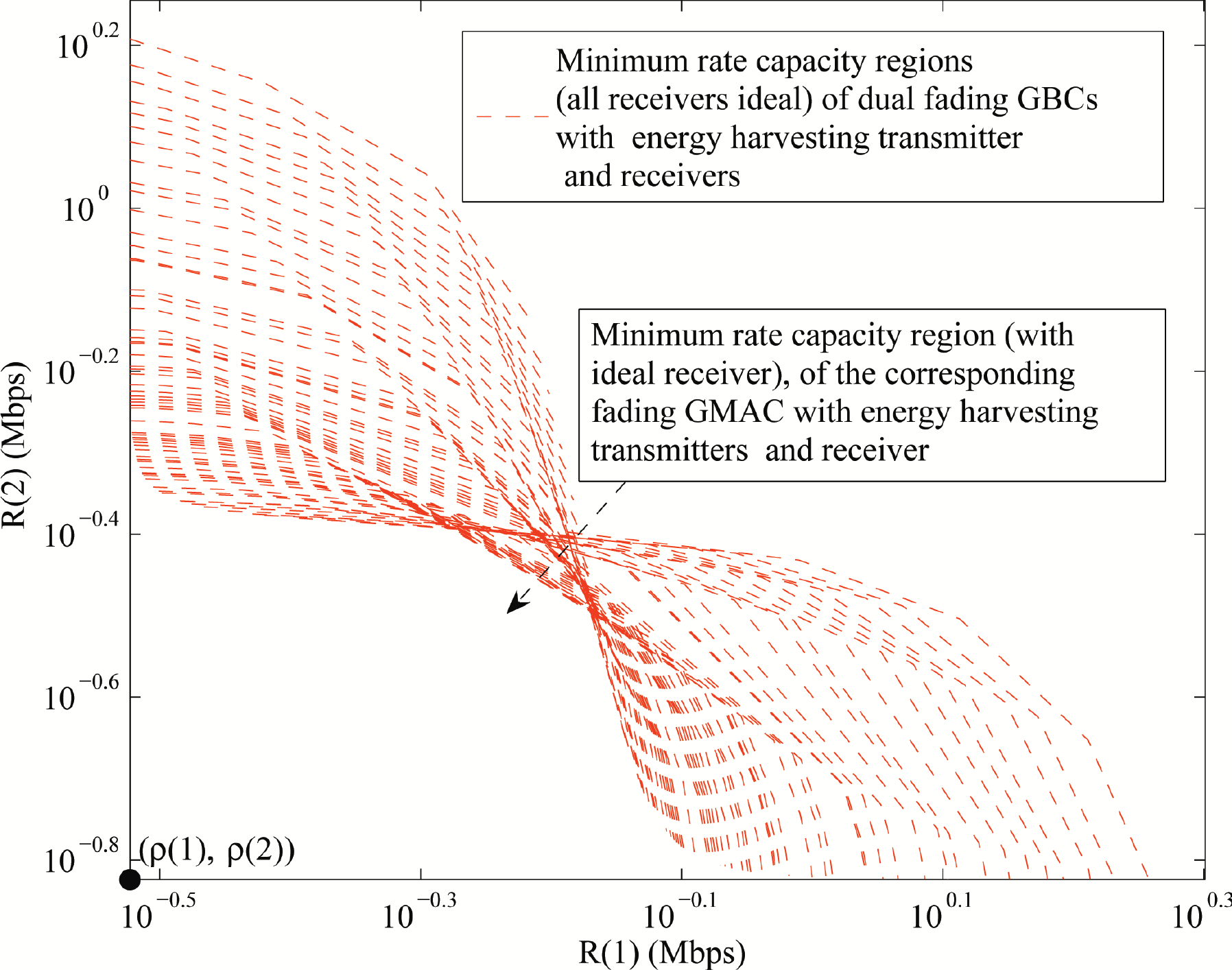}
\caption{The minimum-rate capacity region (in log scale) of  a fading GMAC with energy harvetsing constrainsts and SWIPT, via duality.} \label{duality_fig}
\end{center}
\end{figure}
\section{Conclusion}
\label{S_Conc}
In this work, we considered a fading GBC with an energy harvesting transmitter and receivers. We characterized the minimum-rate capacity region of the channel  with SWIPT for the ideal, time-switching and power splitting receiver architectures. The resultant power control policies obtained are optimal within a general class of permissible policies for energy harvesting SWIPT systems. Also, from our results in this work, we obtained numerically the corresponding capacity regions of a fading GMAC using duality arguments.}  
\section*{Appendix A}
\textit{Proof sketch of Theorem \ref{cap_region}:} {For the sake of clarity and brevity, we explain the proof technique for fading processes with finite support set $\mathcal{H}$. The extension to the continuous fading distributions can be handled in a standard way as in \cite{jindal2003capacity}.} \\ 
{Achievability:} \textit{Codebook Generation:} 
{Fix the power control policy $\mathbf{T}'^s$ obtained by solving the optimization problem in Section \ref{SS_ID} (with $\mathbb{E}[Y^s]$ therein, replaced by $\mathbb{E}[Y^s]-\epsilon$, for some small $\epsilon>0$). Fix message vector $\mathbf{M}$, blocklength $n$ and a rate vector $\mathbf{R}$. The message vector is divided into independent messages $\mathbf{M}_{\mathbf{h}}$ with rate $\mathbf{R}(\mathbf{h})$ such that $R(l)=\sum_{\mathbf{h}}R_l(\mathbf{h})$, $\mathbf{h}\in\mathcal{H}^L$. Corresponding to each joint fading state $\mathbf{h}$, there exists a unique order in which the channel is \textit{degraded}. That is, the receivers can be ordered according to the increasing values of $h(l)/ \sigma_l^2,$ $l\in[1:L]$ such that the receiver with the lowest value of $h(l)/ \sigma_l^2$ is the \textit{weakest receiver} and that with the highest value is the \textit{strongest}.  Accordingly, for each joint fading state (and the corresponding order of degradation), generate an $L$ level superposition codebook as per the \textit{satellization process} (\cite{bergmans1973random}, Section III B). Each of the $2^{nR_l(\mathbf{h})}$ codewords of the $l^{\text{th}}$ {satellite codebook} are generated i.i.d. according to $\mathcal{N}\big(0,T_l'^s(\mathbf{h})\big)$ and independent of other codebooks. The superposition codebooks generated are shared with all the receivers.} \\
\textit{Encoding and Signalling Scheme:} 
 {At time $k$, if the joint fading state is $\mathbf{h}_k$, the next untransmitted symbol in the codewords (to each of the receivers) corresponding to message $\mathbf{M}_{\mathbf{h}}$ is chosen for transmission. Since the transmitter is energy harvesting, in a given slot $k$, it may not have the required amount of energy $T'^s(\mathbf{h}_k)=\sum\limits_{l=1}^{L}T_l'^s(\mathbf{h}_k)$ in the buffer. In that case, transmission is done according to the following \textit{truncated policy}:}
 \begin{displaymath}
   T_k^s(l) = \left\{
     \begin{array}{lr}
      T_l'^s(\mathbf{h}_k) & :  T_l'^s(\mathbf{h}_k) \leq \hat{E}_k^s,\\
       \frac{\hat{E}_k^s}{T'^s(\mathbf{h}_k)}T_l'^s(\mathbf{h}_k) & : T_l'^s(\mathbf{h}_k) > \hat{E}_k^s.
     \end{array}
   \right.
\end{displaymath}
Since the average power expended at the transmitter  $\mathbb{E}[Y^s]-\epsilon$ is strictly less than the average harvested energy $\mathbb{E}[Y^s]$, $E_k^s \rightarrow \infty$ a.s. as $k \rightarrow \infty$ (Chapter 7, \cite{walrand1988introduction}). Accordingly, $T_k^s(l) \rightarrow T_l'^s $ a.s. as $k \rightarrow \infty$ for each $l$. 
\subsubsection*{Decoding}
{Since the channel gains are known perfectly, receiver $l$ can \textit{demultiplex} its received sequence $w^n(l)$ into subsequences $\{w^{n_{\mathbf{h}}}(l)\}$ such that $n=\sum_{\mathbf{h}}n_{\mathbf{h}}$. Note that, by the law of large numbers, $(n_{\mathbf{h}}/n) \geq (1-\delta)p(\mathbf{h})$, for a large $n$ and small $\delta>0$, where $p(\mathbf{h})$ is the probability of the joint fading state $\mathbf{h}$. Hence, for the demultiplexed subsequence corresponding to state $\mathbf{h}$ at each receiver, the decoding operation can be performed using a sub codebook of block length $n(1-\delta)p(\mathbf{h})$.  Each receiver adopts successive cancellation decoding. Note that, each $\mathbf{h}$ corresponds to a particular channel degradation order. Successive cancellation decoding corresponding to the degradation order of state $\mathbf{h}$ is performed such that, each receiver decodes  all the codewords (corresponding to message $\mathbf{M}_{\mathbf{h}})$ of all the receivers {degraded with respect it}, \textit{subtracts them off} and decodes its own codeword.}
\subsubsection*{Analysis of Error Events}
First note that, by ensuring $\eta\mathbb{E}[H(l)T'^s(l)] \geq \Delta_l$, the total mean harvested energy by an ideal receiver $l$, $\mathbb{E}[D^r(l)] \geq \mathbb{E}[T^r(l)]$. Thus, the probability of energy outages can be proven to \textit{vanish} asymptotically as in the Transmitter's case. Next, note  that AMS, ergodic sequences satisfy Asymptotic Equipartition Property (AEP) (\cite{barron1985strong}) under appropriate regularity conditions. These conditions hold good for the setting under consideration. In particular, the channel input and output random variables have finite variances. Also, the non energy harvesting channel with average transmitter power constraint equal to $\mathbb{E}[Y^s]$ has finitely bounded capacity region  and is an upper bound to the capacity of the system model under consideration. Hence, the associated mutual information rates in our case are  all finite. In addition, the AMS stationary mean is dominated by an i.i.d. Gaussian measure on a suitable Euclidean space. Thus the AEP result in (\cite{barron1985strong}) can be invoked in our context. Decoding is done with respect to the joint  finite dimensional distribution induced  by the stationary AMS mean distribution on the channel input and output processes. Taking into consideration these facts, error event analysis can be performed as in the standard case to obtain the required result.
\subsection*{Converse:} To prove the converse part, assume that there exist codebooks (satisfying the required constraints; in particular the minimum rate constraint is equivalent to a time varying constraint on the minimum transmitted power), encoder and decoders such that $P^{(n)}_e$ (average probability of decoding error)  goes to zero as $n \rightarrow \infty$.  For the energy harvesting transmitter, $\frac{1}{n} \sum\limits_{k=1}^{n}T_k^s \leq \frac{1}{n} \sum\limits_{k=1}^{n}Y_k^s\leq \mathbb{E}[Y^s]+\delta_{n_0} $ for $n>n_{n_0} $ and arbitrarily chosen small $\delta_{n_0}>0$. Hence, rates obtained via any coding, decoding scheme subject to the average power constraint $\mathbb{E}[Y^s]$ alone is an upper bound to the system capacity. This proves the converse. \qed

\section*{Appendix B}
\textit{Proof of Theorem \ref{Th_TSR}:}
At receiver $l$ fix an appropriate $\pi_{\mathcal{E}}(l) \in [0,1]$.  In each time slot,  receiver $l$ harvests RF energy with probability $\pi_{\mathcal{E}}(l)$. If energy is harvested, channel output is recorded as an erasure. Thus, the system with time-switching receiver can be equivalently thought of as a  fading GBC concatenated with an erasure channel. Encoding is done as per in the proof of Theorem \ref{cap_region}. Decoder discards the erasures and perform successive cancellation on the remaining. The erasures are independent of the channel output and the fraction of erasure instances converge almost surely to $\pi_{\mathcal{E}}(l)$.   Hence, the achievability follows as in the case of Theorem \ref{cap_region}.
\par To prove the converse, we can assume that the encoder has access to non-causal knowledge of erasure locations. The encoders can choose to send a zero symbol during the erasure instances. The decoder discards the erased channel outputs. This, along with the converse argument in Theorem \ref{cap_region} proves the converse for the time-switching receiver case. \qed

\bibliographystyle{IEEEtran}
\bibliography{bibfile_fb_capacity}
 \end{document}